\newcommand{\tab}{\hspace{2em}}
\definecolor{darkred}  {rgb}{0.5,0,0}
\definecolor{darkblue} {rgb}{0,0,0.5}
\definecolor{darkgreen}{rgb}{0,0.5,0}
\newcommand{\ti}[1]{\textit{#1}}
\let\q\quad
\let\oldexists\exists
\renewcommand{\exists}{\oldexists\,}
\theoremstyle{definition}
\newtheorem{theorem}{Theorem}
\newtheorem{proposition}{Proposition}
\newtheorem{lemma}{Lemma}
\newtheorem{definition}{Definition}
\let\oldmid\mid
\renewcommand{\mid}{\,\oldmid\,}
\newcommand{\mc}[1]{\mathcal{#1}}
\newcommand{\mbf}[1]{\mathbf{#1}}
\newcommand{\mbb}[1]{\mathbb{#1}}
\newcommand{\id}{\textrm{id}}
\newcommand{\wt}[1]{\widetilde{#1}}
\renewcommand{\(}{\left(}
\renewcommand{\)}{\right)}
\renewcommand{\[}{\left[}
\renewcommand{\]}{\right]}
\DeclarePairedDelimiter{\abs}{\lvert}{\rvert}%
\DeclarePairedDelimiter{\norm}{\lVert}{\rVert}%
\let\oldabs\abs
\def\abs{\@ifstar{\oldabs}{\oldabs*}}
\let\oldnorm\norm
\def\norm{\@ifstar{\oldnorm}{\oldnorm*}}
\newcommand{\C}{\mathbb{C}}
\newcommand{\R}{\mathbb{R}}
\newcommand{\1}{\mathds{1}}
\newcommand{\cD}{\mathcal{D}}
\newcommand{\cE}{\mathcal{E}}
\newcommand{\cH}{\mathcal{H}}
\newcommand{\cK}{\mathcal{K}}
\newcommand{\cT}{\mathcal{T}}
\newcommand{\cU}{\mathcal{U}}
\renewcommand{\phi}{\varphi}
\renewcommand{\phi}{\varphi}
\DeclareMathOperator{\tr}{Tr}
\newcommand{\op}[2]{|#1\rangle \langle #2|} 
\newcommand{\dop}[1]{\ket{#1}\bra{#1}}
\renewcommand{\(}{\left(}
\renewcommand{\)}{\right)}
\renewcommand{\[}{\left[}
\renewcommand{\]}{\right]}
\newcommand{\UmUU}{U(-\btheta)\otimes U(\btheta)^{\otimes 2}}
\newcommand{\UmUUdag}{U(\btheta)\otimes U(-\btheta)^{\otimes 2}}
\newcommand{\btheta}{\text{\boldmath$\theta$}}
\newcommand{\bx}{\mathbf{x}}
\newcommand{\msf}{\mathsf}
\newcommand{\pure}{\text{pure}}
\renewcommand{\phase}{\btheta}
\newcommand{\ideal}{\text{ideal}}
\newcommand{\proc}{\text{proc}}
\newcommand{\cptp}{\text{CPTP}}
\definecolor{cool_green}{rgb}{0.0, 0.5, 0.0}
\begin{document}

\title{Process-optimized phase covariant quantum cloning}
\author{Chloe Kim}
\affiliation{University of Illinois at Urbana-Champaign}
\author{Eric Chitambar}
\affiliation{University of Illinois at Urbana-Champaign}
\date{\today}

\begin{abstract}
After the appearance of the no-cloning theorem, approximate quantum cloning machines (QCMs) has become a well-studied subject in quantum information theory.
Among several measures to quantify the performance of a QCM, single-qudit
fidelity and global fidelity have been most widely used. In this paper 
we compute the optimal global fidelity for phase-covariant cloning machines via semi-definite programming optimization, thereby completing a remaining gap in the previous results on QCMs.  We also consider optimal simulations of the cloning and transpose cloning map, both by a direct optimization and by a composition of component-wise optimal QCMs.  For the cloning map the composition method is sub-optimal whereas for the transpose cloning map the method is asymptotically optimal.
\end{abstract}

\maketitle

\section{Introduction}
\tab The famous no-cloning theorem tells us that
there exists no quantum operation that copies an arbitrary quantum state
\cite{Park1970,Dieks1962,Wootters1982}.  Despite this impossibility, the notion of approximate quantum cloning machines was introduced in 1996 \cite{Buzek1996}, and the topic has received extensive development over the past twenty-five years.  The inability to universally copy states is a fundamental signature of ``quantumness" that has significant applications.  In particular, the no-cloning theorem is essential for ensuring security in cryptographic tasks like quantum key distribution (QKD) \cite{Bennett1984}.  A general overview of quantum cloning and its relevance to quantum information processing can be found in review papers such as \cite{Scarani2005, Fan2014}.

An ideal $N\to M$ quantum cloner is a non-linear map
\begin{equation}
  \rho^{\otimes N} \mapsto \rho^{\otimes M}
\end{equation}
for any density matrix $\rho$ of some $d$-dimensional system $\msf{S}$, and a quantum cloning machine (QCM) approximates this operation via a physically-realizable quantum
channel.  That is, an $N\to M$ QCM is a completely positive trace-preserving (CPTP) map $\mc{E}$ satisfying
\begin{align}
\mc{E}\left(\rho^{\otimes N}\right)=\sigma^{\msf{S}_1\cdots\msf{S}_M}\approx \rho^{\otimes M}.
\end{align}
Different criteria are used to categorize QCMs.  For instance, the QCM is called symmetric if $\sigma^{\msf{S}_i}:=\tr_{\overline{\msf{S}}_i}\sigma$ is the same reduced state for all $\msf{S}_i$, where $\tr_{\overline{\msf{S}}_i}$ indicates a partial trace over all subsystems but $\msf{S}_i$.  An economic QCM is one that restricts to maps using no ancillary system other than the $\msf{S}_i$
output systems \cite{Buscemi2005}.  Finally, universality of a QCM means that it clones all the input states equally well, and a QCM is optimal
if the cloning process is simulated optimally with respect to a given figure of merit \cite{Scarani2005}.  

The most commonly used figure of merit for cloning is the average single-qudit fidelity, which measures the distance between the input state $\rho$ and the reduced state of each output
copy $\sigma^{\msf{S}_i}$.  Another possible measure evaluates the average global fidelity between the QCM output state $\sigma^{\msf{S}_1\cdots\msf{S}_M}$ and the ideal output state
$\rho^{\otimes M}$. In \cite{Koniorczyk2013a}, this global fidelity was called
\emph{process fidelity}, and the authors used the semidefinite programming (SDP)
techniques introduced in \cite{Audenaert2002a} to optimize the process fidelity
for cloning one qubit ($d=2$).  The goal of this paper is to extend this result by computing the optimal process fidelity for cloning qudits ($d>2$).  In particular, we consider the cloning of maximally coherent pure states having the form
\begin{equation}\label{phase covariant-state}
\ket{\btheta}=\frac{1}{\sqrt{d}}\sum_{k=1}^de^{i\theta_k}\ket{k},
\end{equation}
where $\btheta$ denotes $d$-tuple of angles $(\theta_1,\cdots,\theta_d)$, $\theta_i\in \left[0,2\pi\right)$.  For $d=2$, this represents the family of equatorial states on the Bloch sphere.  Up to a unitary transformation, the four BB84 states $\{\ket{0},\ket{1},\ket{\pm}=\frac{1}{\sqrt{2}}(\ket{0}\pm\ket{1})\}$ belong to this family \cite{Bennett1984}.
\tab 

A (non-universal) QCM whose domain is restricted to the phase-encoded input states $\ket{\btheta}$  is called a phase-covariant quantum cloner.
Bru\ss{} et al. first studied the cloning of BB84 states \cite{Bruss2000a},
and they showed that the optimal map that clones BB84 states also optimally
clones any arbitrary phase-covariant states on the Bloch sphere.
They obtained an optimal single-qubit fidelity of
$\frac{1}{2} + \frac{1}{\sqrt{8}}$, which is greater than that of the universal
cloner ($\frac{5}{6}$).
Subsequent work has been conducted on the phase-covariant cloning of
$1\to M$ qubits \cite{Fan2001b},
$1\to 2$ qutrits \cite{DAriano2003, DAriano2001,Cerf2002b},
$1\to M$ qudits \cite{FanEtAl2003, Buscemi2007}, 
$N\to M$ qubits \cite{Fiurasek2004},
$N\to M$ qudits for $M = kd + N$ \cite{Buscemi2007},
optimizing over the single qudit fidelity (See \cite{Buscemi-PhD} for more).

In this paper, we present an explicit formula for an optimal $1\to 2$ phase covariant QCM,
using the process
fidelity as the figure of measure.  This completes the picture of $1\to 2$ optimal QCMs, as summarized in Table \ref{tab:QCM-results}.
The organization of this paper as follows.  In Section \ref{Sect:notation} we specify
notation and definitions related to the process fidelity.  In Section \ref{Sect:phase-covariant} we present
our main results on optimal process fidelity of phase-covariant cloning. We introduce
transpose cloning in Section \ref{Sect:transpose-cloning} and provide the optimal single-qudit and process fidelities.  This allows us to consider in Section \ref{Sect:modular} the  modular construction of larger QCMs by composing smaller ones.  Finally, some concluding remarks are provided in Section \ref{Sect:conclusion}.

\onecolumngrid
\vspace{1em}


{\renewcommand{\arraystretch}{1.15}
\begin{table}[h!]
\centering
\begin{tabular}{|c|c|c|c|}
\hline
\thead{dimension} & \thead{type} & \thead{optimal single-qubit fidelity} & \thead{optimal process fidelity}\\
\hline
\hline
$1\to 2$ qubits & universal & $\frac{5}{6}$ \cite{Buzek1996, Bruss1998a, Gisin1997, Gisin1999a} & $\frac{2}{3}$\cite{Buzek1996}\\
\hline
$1\to 2$ qudits & universal & $\frac{d+3}{(2d+2)}$ \cite{Buzek1999} & $2/(d+1)$\cite{Werner1998}\\
\hline
$1\to 2$ qubits & phase covariant & $\frac{1}{2} + \frac{1}{\sqrt{8}}$ \cite{Acin2004b, Bruss2000a, Durt2004, Griffiths1997} & 0.75 \cite{Koniorczyk2013a}\\
\hline
$1\to 2$ qutrits & phase covariant & $\frac{5+\sqrt{17}}{12}$\cite{DAriano2003, DAriano2001,Cerf2002b} & $5/9\;$ (this paper)\\ 
\hline
$1\to 2$ qudits & phase covariant & $\frac{1}{d} + \frac{d-2+\sqrt{d^2+4d-4}}{4d}$ \cite{FanEtAl2003} & $(2d-1)/d^2\;$ (this paper)\\ 
\hline
\end{tabular}
\caption{Optimal single qudit fidelity and process fidelity of symmetric UQCM and phase covariant QCM}
\label{tab:QCM-results}
\end{table}
}
\vspace{20pt}

\twocolumngrid

\section{Notations and definitions}

\label{Sect:notation}
Let $B(\cH)$ denote the set of bounded operators on a Hilbert space $\cH$.
Let $\cD \subset B(\cH)$ be the set of density operators, i.e. positive trace-one
elements, $\cD_{\pure} \subset \cD$ the set of pure states, i.e. rank-one density operators,
and $\cD_{\phase} \subset \cD_{\pure}$ the set of states $\op{\btheta}{\btheta}$ having the form of \eqref{phase covariant-state}.
The fidelity \cite{Jozsa-1994a} of any two $\rho,\sigma \in \cD$ is given by
\begin{equation}
  F(\rho, \sigma) := \( \tr \sqrt{\sqrt{\rho}\sigma\sqrt{\rho}} \)^2.
\end{equation}

Suppose $\cE_{\ideal}: B(\cH)\to B(\cK)$ is an ``ideal'' map that is not
necessarily a quantum channel (not even necessarily linear), i.e,
completely-positive and trace-preserving (CPTP).  We are interested in
approximating $\cE_{\ideal}$ by a physically-realizable quantum channel $\mc{E}$,
or possibly approximating just the action of $\mc{E}_{\ideal}$ on some restricted
set of inputs $S$. There are various approaches to quantifying how well $\mc{E}$
approximates $\cE_{\ideal}$, and here we consider the process fidelity.


\begin{definition}
Let $S\subset \cD$ be the set of states that are generated by the action of a compact group $G$,
and let $\mu$ be the Haar measure induced by $G$.
The \ti{process fidelity} between an arbitrary ideal map $\cE_{\ideal}$ and a CPTP map $\cE$ is defined as
\begin{equation}
F_{\proc}(\cE_{\ideal}, \cE|S) := \int_{S} F(\cE_{\ideal}(\sigma), \cE(\sigma)) d\mu(\sigma).
\end{equation}
\end{definition}
If $\cE_{\ideal}$ maps pure state to pure states and $S\subset\mc{D}_\pure$,
then the process fidelity is equivalent to
\noeqref{Eq:proc-pure}
\begin{equation}\label{Eq:proc-pure}
F_{\proc}(\cE_{\ideal}, \cE| S) = \int_{S} \tr \Big[ \cE_{\ideal}(\sigma) \cE(\sigma) \Big] \;d\mu(\sigma).
\end{equation}
A key feature of this expression is that the integrand is linear in $\mc{E}(\sigma)$.  This will allow us to exploit symmetry properties of the Haar measure below.  On the other hand, note that $\cE_{\ideal}$ need not be linear, such as with the cloning map $\rho\mapsto \rho\otimes \rho$.

\begin{definition}
Let $\cptp(\mc{H}\to\mc{K})$ be the set of all quantum channels mapping $B(\cH)$ to $B(\cK)$.  The \ti{optimal process fidelity} of a map $\mc{E}_\ideal:B(\mc{H})\to B(\mc{K})$ is the maximum process fidelity
over all possible quantum channels, i.e.
\begin{equation}
F^*_{\proc}(\cE_{\ideal}|S) := \max_{\cE\in \cptp(\mc{H}\to\mc{K})} \; F_{\proc}( \cE_{\ideal}, \cE|S).
\end{equation}
\end{definition}
Given an orthonormal basis $\{\ket{i}\}_{i=1}^d$ for $\mc{H}$, the Choi-Jamio\l{}kowski isomorphism \cite{Jamiolkowski-1972a, Choi-1975a} establishes an equivalence between every channel $\cE\in \cptp(\mc{H}\to\mc{K})$ and an operator $J(\mc{E})\in B(\mc{H})\otimes B(\mc{K})$,
\noeqref{choi-iso}
\begin{equation}\label{choi-iso}
\cE \leftrightarrow J(\mathcal{E}):=\sum_{i,j} \op{i}{j}\otimes \cE(\op{i}{j}).
\end{equation}
The operator $J(\cE)$ is called
the Choi matrix of $\cE$ and the action of $\cE$ can be directly expressed
in terms of $J(\mc{E})$ as
\begin{equation}
\cE(\rho) = \tr_{\mc{H}} (J(\cE)(\rho^{T}\otimes \1)),
\end{equation}
where $\1$ is the identity on $\mc{K}$ and $\rho^T$ is the transpose of $\rho$ in some fixed basis.  When $S\subset \cD_{\pure}$ and $\cE_{\ideal}: \cD_{\pure}\to \cD_{\pure}$, the process fidelity can be expressed in terms of the Choi matrix as
\begin{align}
  F_{\proc}(\cE_{\ideal}, \cE| S)= \int_S \tr\Big[ J(\cE) \rho^{T}\otimes \cE_{\ideal}(\rho) \Big] \; d\mu(\rho).\qquad\label{proc-fid-choi}
\end{align}
To compute $F_{\proc}^*(\cE_{\ideal}, \cE|S)$, the problem then reduces to finding  the Choi matrix $J(\cE)$ that maximizes Eq. \eqref{proc-fid-choi}.

\section{Process-optimized phase-covariant quantum cloning}

\label{Sect:phase-covariant}

\subsubsection{Characterization of an average Choi map}
The $1\to 2$ ideal phase covariant cloner $\cE_{\ideal}$ is given by
\eqref{Eq:ideal-map}
\begin{equation}\label{Eq:ideal-map}
  \cE_{\ideal}(\dop{\btheta}) = \dop{\btheta}^{\otimes 2}
\end{equation}
for an arbitrary $\dop{\btheta}\in \cD_{\phase}$.  The goal is to find an optimal quantum channel $\cE$ satisfying
\begin{align}
  F_{\proc}(\mc{E}_\ideal,\mc{E}|\mc{D}_{\phase}) = F^*_{\proc}(\cE_{\ideal}|\mc{D}_{\phase}).
\end{align}
It is useful to write $\ket{\btheta}$ as $U(\btheta)\ket{\phi^+_d}$,
where $U(\btheta)=\sum_{k=1}^de^{i\theta_k}\op{k}{k}$ and 
$\ket{\phi^+_d}=\frac{1}{\sqrt{d}}\sum_{k=1}^d\ket{k}$.  
By the definition expressed in \eqref{proc-fid-choi}, the process fidelity for
$\cE$ can be written as
\noeqref{eq:proc-def-choi}
\begin{equation}\label{eq:proc-def-choi}
  F_{\proc}(\cE_{\ideal}, \cE| \mc{D}_{\phase}) =\tr\left[\dop{\phi_d^+}^{\otimes 3}\mc{T}(J(\cE))\right],
\end{equation}
where $\cT(\cdot)$ denotes the ``twirling'' operation
\begin{small}
\begin{equation}
  \label{phase-anti-proc}
  \mc{T}(X):=\int \UmUUdag\left(X\right)\UmUU \;d\mu(\btheta).
\end{equation}
\end{small}
We exploit the symmetric properties of $\cT(J(\cE))$ to formulate the constraints
of a semidefinite program to obtain maximum process fidelity. This is a standard
trick \cite{Vollbrecht-2001a}, but we include it as lemmas for the sake of self-containment.

Let $S_d$ be the permutation group of $d$ elements.
Given an orthonormal basis $\{\ket{k}\}_k$, define
\begin{equation}
  U_\pi = \sum_{k=1}^d \op{\pi(k)}{k}, \qquad\pi\in S_d.
\end{equation}
In other words, $U_\pi$ permutes the basis vectors given a permutation
$\pi \in S_d$.
Next, let $V_\sigma$ be the operator permuting three sub-systems, where $\sigma\in S_3$.
For example,
\begin{equation}
  V_{(23)}\ket{\phi_1}\otimes\ket{\phi_2}\otimes \ket{\phi_3} = \ket{\phi_{1}} \otimes \ket{\phi_{3}} \otimes \ket{\phi_{2}}.
\end{equation}

\begin{lemma}\label{Lem:avg-choi}
  Suppose $\cE$ is a quantum channel with Choi matrix $J(\cE)$.
  If we define the average Choi map for the channel $\cE$ as
  \begin{equation}
    \label{avg-choi-equatorial}
    \wt{J}(\cE) := \frac{1}{2\abs{S_d}}\sum_{\substack{\pi\in S_d\\\sigma\in\{\id,(23)\}}}      V_\sigma U_\pi^{\otimes 3} \cT(J(\cE)) U_\pi^{\dagger \otimes 3} V_\sigma^{\dagger},
  \end{equation}
  then we have
  \begin{enumerate}[(i)]
    \item $\wt{J}(\cE)$ is invariant under conjugation by $\UmUU$
      for any $\btheta$;\label{Eq:lem-Utheta}
    \item $\wt{J}(\cE)$ is invariant under conjugation by $U_\pi^{\otimes 3}$
    for all $\pi \in S_d$;\label{Eq:lem-Upi}
    \item $\wt{J}(\cE)$ is invariant under conjugation by $V_{(23)}$;\label{Eq:lem-V23}
    \item $\wt{J}(\cE)$ is positive and $\tr_{23}(\wt{J}(\cE)) = \1$.\label{Eq:lem-pos}
  \end{enumerate}
\end{lemma}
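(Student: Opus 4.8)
The plan is to read $\wt{J}(\cE)$ in \eqref{avg-choi-equatorial} as the average of $\cT(J(\cE))$ over the finite group $\{\id,(23)\}\times S_d$ acting by the unitaries $(\sigma,\pi)\mapsto V_\sigma U_\pi^{\otimes 3}$, and to deduce each of the four items from the invariance this averaging creates, combined with the invariance already built into the twirl $\cT$. No new idea is needed; the content is entirely in checking that conjugation and partial trace interact with the averaging in the expected way.

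First I would record two elementary facts about the operators in play. Because $U(\btheta)=\sum_k e^{i\theta_k}\op{k}{k}$ is diagonal, $U(\btheta)U(\btheta')=U(\btheta+\btheta')$ and $U(\btheta)^{\dagger}=U(-\btheta)$, so the operators $\UmUU$ form a commutative unitary representation of the $d$-torus; moreover, since the second and third tensor factors of $\UmUU$ both equal $U(\btheta)$, the swap $V_{(23)}$ commutes with $\UmUU$ (and $V_{\id}$ does so trivially). Translation invariance of the Haar measure $\mu$ then gives at once that $\cT(X)$ is invariant under conjugation by $\UmUU$ for every $\btheta$. Second, $U_\pi U_{\pi'}=U_{\pi\pi'}$, each $U_\pi^{\otimes 3}$ commutes with every $V_\sigma$, and $U_\pi\, U(\btheta)\, U_\pi^{\dagger}=U(\bomega)$ where $\bomega$ is the angle tuple with entries reindexed by $\pi$; hence $U_\pi^{\otimes 3}\,\UmUU\,U_\pi^{\dagger\otimes 3}$ is again of the torus form $U(-\bomega)\otimes U(\bomega)^{\otimes 2}$.

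Granting these, the four items are short. For \ref{Eq:lem-Upi}: conjugating $\wt{J}(\cE)$ by $U_{\pi_0}^{\otimes 3}$ passes through every $V_\sigma$ and sends the summand indexed by $\pi$ to the one indexed by $\pi_0\pi$, so re-indexing $\pi\mapsto\pi_0\pi$ leaves the sum fixed. For \ref{Eq:lem-V23}: conjugating by $V_{(23)}$ replaces $V_\sigma$ by $V_{(23)\sigma}$, which merely swaps the two $\sigma$-terms since $\{\id,(23)\}$ is a subgroup of $S_3$. For \ref{Eq:lem-Utheta}: using the second fact, a conjugation of $\wt{J}(\cE)$ by $\UmUU$ is carried, summand by summand, through $V_\sigma$ and $U_\pi^{\otimes 3}$ into a conjugation of $\cT(J(\cE))$ by an operator of the torus form $U(-\bomega)\otimes U(\bomega)^{\otimes 2}$, which $\cT(J(\cE))$ absorbs by the first fact; hence $\wt{J}(\cE)$ is $\UmUU$-invariant. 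For \ref{Eq:lem-pos}: positivity is immediate because $J(\cE)\ge 0$ and conjugation by unitaries, Haar integration, and finite sums of positive operators all preserve positivity; for the normalization I would check that $\tr_{23}$ commutes with each of the three averaging operations — with $\cT$ because, using cyclicity of the trace over the discarded subsystems, the second and third factors of $\UmUU$ cancel and the first factor is absorbed via $U(\btheta)\,\1\,U(-\btheta)=\1$; with $U_\pi^{\otimes 3}$ because its first factor leaves $\tr_{23}$ and its remaining two factors cancel; and with $V_{(23)}$ because relabeling the two discarded systems changes nothing — so $\tr_{23}\wt{J}(\cE)=\tr_{23}J(\cE)=\1$, the last equality since $\cE$ is trace preserving.

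The only spot demanding care is \ref{Eq:lem-Utheta}: one must correctly track how conjugation by $V_\sigma$ and by $U_\pi^{\otimes 3}$ acts on the diagonal unitary $\UmUU$ — the torus parameter gets permuted by $\pi$, and is left untouched by $V_{(23)}$ precisely because the second and third factors coincide, rather than being destroyed — and then confirm that the result still has the torus form that $\cT$ leaves invariant. Once the two preliminary facts are pinned down this is routine bookkeeping, so I expect no genuine obstacle; the lemma is a standard symmetrization argument, and writing it out amounts to verifying that conjugation and partial trace commute with the averaging as claimed.
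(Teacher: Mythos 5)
Your proof is correct and follows the same standard symmetrization/group-averaging route that the paper relies on (and spells out only briefly, in the analogous Lemma \ref{Lem:avg-choi-anti} of the appendix): each invariance comes from re-indexing the finite sum or from Haar invariance of the twirl, after noting that $V_{(23)}$ and the permuted torus elements interact with $\UmUU$ exactly as you describe, and positivity and $\tr_{23}\wt{J}(\cE)=\1$ survive every averaging step. No gaps; your write-up is in fact more detailed than the paper's.
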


Note that property (iv) of Lemma \ref{Lem:avg-choi} assures that $\wt{J}(\mc{E})$ is a valid Choi matrix of a quantum channel.  
\begin{lemma} \label{Lem:proc}
  Let $\cE_{\ideal}$ be the ideal
  phase-covariant cloning map and $\mc{E}$ an arbitrary quantum channel.  Then
  \noeqref{Eq:proc-avg-choi}
  \begin{equation}\label{Eq:proc-avg-choi}
    F_{\proc}(\cE_{\ideal}, \cE | \mc{D}_{\phase}) = \tr\left[\phi_d^{+\otimes 3} \wt{J}(\mc{E})\right].
  \end{equation}
 \end{lemma}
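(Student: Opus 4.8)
The plan is to derive the lemma directly from the Choi-matrix formula \eqref{eq:proc-def-choi}, namely $F_{\proc}(\cE_{\ideal},\cE\mid\mc{D}_{\phase})=\tr[\dop{\phi_d^+}^{\otimes 3}\,\cT(J(\cE))]$, by showing that replacing $\cT(J(\cE))$ with the further average $\wt J(\cE)$ of \eqref{avg-choi-equatorial} leaves the trace against the reference projector $\phi_d^{+\otimes 3}=\dop{\phi_d^+}^{\otimes 3}$ unchanged.

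The first step is the elementary observation that $\ket{\phi_d^+}^{\otimes 3}$ is a common fixed vector of all the unitaries entering \eqref{avg-choi-equatorial}. Since $U_\pi$ only permutes the basis $\{\ket k\}$, it fixes the uniform superposition $\ket{\phi_d^+}=\tfrac1{\sqrt d}\sum_k\ket k$, so $U_\pi^{\otimes 3}\ket{\phi_d^+}^{\otimes 3}=\ket{\phi_d^+}^{\otimes 3}$; and $V_\sigma$ merely interchanges identical tensor factors, so it too fixes $\ket{\phi_d^+}^{\otimes 3}$. Writing $W_{\pi,\sigma}:=V_\sigma U_\pi^{\otimes 3}$ (a unitary), we conclude $W_{\pi,\sigma}\,\dop{\phi_d^+}^{\otimes 3}\,W_{\pi,\sigma}^\dagger=\dop{\phi_d^+}^{\otimes 3}$, equivalently $W_{\pi,\sigma}^\dagger\,\dop{\phi_d^+}^{\otimes 3}\,W_{\pi,\sigma}=\dop{\phi_d^+}^{\otimes 3}$, for every $\pi\in S_d$ and $\sigma\in\{\id,(23)\}$.

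The second step is to combine this with cyclicity of the trace: for each summand of \eqref{avg-choi-equatorial},
\begin{equation}
\tr\!\left[\dop{\phi_d^+}^{\otimes 3}\,W_{\pi,\sigma}\,\cT(J(\cE))\,W_{\pi,\sigma}^\dagger\right]=\tr\!\left[W_{\pi,\sigma}^\dagger\,\dop{\phi_d^+}^{\otimes 3}\,W_{\pi,\sigma}\;\cT(J(\cE))\right]=\tr\!\left[\dop{\phi_d^+}^{\otimes 3}\,\cT(J(\cE))\right].
\end{equation}
Averaging this identity over the $2\abs{S_d}$ group elements yields $\tr[\dop{\phi_d^+}^{\otimes 3}\,\wt J(\cE)]=\tr[\dop{\phi_d^+}^{\otimes 3}\,\cT(J(\cE))]$, and \eqref{eq:proc-def-choi} then gives the claimed identity. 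There is no genuine obstacle here; the only point needing care is keeping track of the daggers when cycling $W_{\pi,\sigma}$ under the trace and invoking the invariance of $\dop{\phi_d^+}^{\otimes 3}$ with the correct orientation (using $WAW^\dagger=A\iff W^\dagger AW=A$ for unitary $W$). Equivalently, one may push the discrete average onto $\dop{\phi_d^+}^{\otimes 3}$ rather than onto $\cT(J(\cE))$; since the projector is invariant under $W_{\pi,\sigma}$, the two formulations coincide, and this phrasing may read more transparently.
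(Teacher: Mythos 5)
Your proof is correct and follows essentially the same route as the paper's: starting from $F_{\proc}=\tr[\phi_d^{+\otimes 3}\,\cT(J(\cE))]$ and using the invariance of $\phi_d^{+\otimes 3}$ under $U_\pi^{\otimes 3}$ and $V_{(23)}$ together with cyclicity of the trace to absorb the discrete average for free. If anything, your version is a slight streamlining, since the paper first invokes the intertwining identity $\cT(X)=U_{\pi'}^{\dagger\otimes 3}\cT(U_{\pi'}^{\otimes 3}XU_{\pi'}^{\dagger\otimes 3})U_{\pi'}^{\otimes 3}$ before averaging, whereas you cycle the fixed projector directly.
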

  \begin{proof}
    Let $\pi' \in S_d$. We can observe that
    \begin{equation}
      \cT(X) = U_{\pi'}^{\dagger\otimes 3} \cT(U_{\pi'}^{\otimes 3}X U_{\pi'}^{\dagger\otimes 3}) U_{\pi'}^{\otimes 3}.
    \end{equation}
    Using (ii) of the Lemma \ref{Lem:avg-choi}, it follows that
    \begin{small}
    \begin{align}
      &F_{\proc}(\cE_{\ideal}, \mc{E}|\mc{D}_{\phase})= \tr\left[\phi_d^{+\otimes 3} \cT(J(\cE))\right]\\
        &=\tr\left[\phi_d^{+\otimes 3} U_{\pi'}^{\dagger\otimes 3} \cT(U_{\pi'}^{\otimes 3}J(\cE) U_{\pi'}^{\dagger\otimes 3}) U_{\pi'}^{\otimes 3}\right]\notag\\
        &= \tr\left[\phi_d^{+\otimes 3}\frac{1}{\abs{S_d}}\sum_{\pi\in S_d}U_\pi^{\otimes 3}\cT(J(\cE))U_\pi^{\dagger\otimes 3}\right]\notag\\
        &=\tr\left[\phi_d^{+\otimes 3} \frac{1}{2\abs{S_d}}\sum_{\substack{\pi\in S_d\\\sigma\in\{\id,(23)\}}} V_\sigma U_\pi^{\otimes 3} \cT(J(\cE)) U_\pi^{\dagger \otimes 3} V_\sigma^{\dagger}\right]\\
        &=\tr\left[\phi_d^{+\otimes 3} \wt{J}(\mc{E})\right],
    \end{align}
  \end{small}
    where we have used the facts that $\phi_d^+$ is invariant under $\pi'$ and  $\phi_d^{+\otimes 3}$ is invariant under $V_{(23)}$.
  \end{proof}
In summary, we have
\begin{align}
\label{Eq:proc-fidelity2}
F^*_{\proc}(\mc{E}_{\ideal}|\mc{D}_\phase)&=\max \tr\left[\phi_d^{+\otimes 3}X\right]
\end{align}
in which the maximization is taken over all operators $X\in B(\mbb{C}^d)^{\otimes 3}$ satisfying properties \ref{Eq:lem-Utheta}, \ref{Eq:lem-Upi}, \ref{Eq:lem-V23}, \ref{Eq:lem-pos} in 
Lemma \ref{Lem:avg-choi}.

\subsubsection{Optimization via semidefinite programming}
\tab Using the invariant properties of the average Choi matrix for quantum
channel $\cE$ in Lemma \ref{Lem:proc}, we will construct a semi-definite program (SDP) to obtain an optimal $\mc{E}$
that maximizes the process fidelity.  A particularly nice reference for applying semidefinite programming to quantum information problems is \cite{Watrous-2018a}, and we apply the basic results here.

First, we characterize a $d^3\times d^3$ hermitian operator $X\neq 0$ that satisfies 
\ref{Eq:lem-Utheta}, \ref{Eq:lem-Upi}, \ref{Eq:lem-V23}, \ref{Eq:lem-pos} in 
Lemma \ref{Lem:avg-choi}. If we write
\begin{equation}
  X=\sum_{i,j,k,l,m,n}x_{ijklmn}\op{ijk}{lmn},
\end{equation}
then the invariance under $\UmUU$ gives
\begin{equation}
  \label{Eq:phase-invariance}
  x_{ijklmn}(1-e^{i(-\theta_i+\theta_j+\theta_k+\theta_l-\theta_m-\theta_n)})=0.
\end{equation}
This can be satisfied if and only if $x_{ijklmn}=0$ or 
$-\theta_i+\theta_j+\theta_k+\theta_l-\theta_m-\theta_n$ is identically zero. If we next add the invariance under under $V_{(23)}$ and $U_\pi^{\otimes 3}$, then $X$ has the form $\sum_{i=1}^9 x_iX_i$, where $x_i \in \C$ and
\begin{small}
\begin{align}
  &X_1 = \sum_{i} \op{iii}{iii},\;
  X_2 = \sum_{i\neq k} \op{iik}{iik} + \op{iki}{iki},\\
  &X_3 = \sum_{i\neq k} \op{kii}{kii},\\
  &X_4 = \sum_{i\neq k} \op{kik}{iii} + \op{iii}{kik} + \op{kki}{iii} + \op{iii}{kki},\\
  &X_5 = \sum_{i\neq k}  \op{iik}{iki} + \op{iki}{iik},\;
  X_6 = \sum_{i\neq k\neq \ell} \op{ik\ell}{ik\ell},\\
  &X_7 = \sum_{i\neq k\neq \ell} \op{kk\ell}{i\ell i} + \op{\ell k\ell}{iik},\\
  &X_8 = \sum_{i\neq k\neq \ell} \op{kk\ell}{ii\ell} + \op{\ell k\ell}{iki},\;
  X_9 = \sum_{i\neq k\neq \ell} \op{ik\ell}{i\ell k}.
\end{align}
\end{small}
If we denote $\mathbf{x} = (x_i) \in \C^9$, then 
\small
\begin{align}
  \tr\left[\phi_d^{+\otimes 3} X\right]= \frac{1}{d^2}x_1 + \frac{(d-1)}{d^2} (2x_2 + x_3 + 4x_4 + 2x_5)\\
  +\frac{(d-1)(d-2)}{d^2}(x_6 + 2x_7 + 2x_8 + x_9).
\end{align}
The condition $\tr_{23} X=\1$ corresponds to the equality
\small
\begin{equation}
  A(\mbf{x}) := x_1 + (d-1)(2x_2 + x_3) + (d-1)(d-2)x_6 = 1.
\end{equation}
Finally, the positivity constraint $X \geq 0$ allows us to express \eqref{Eq:proc-fidelity2} in a simplified SDP.  Let $\mathbf{a} = (a_i)$ where $a_i$ is the coefficient of $x_i$'s in $A(\mathbf{x})$,
and $\mathbf{c} = (c_i)$ be the, the coefficient of $x_i$ in $F_\cE(\mathbf{x})$.
Define $F_0 = 0_{d^3\times d^3} \oplus [1]$, $F_i = X_i \oplus [-a_i].$
Then we have the primal form of the SDP
\begin{align} \label{SDP}
  &\text{minimize} \q -\mathbf{c}^T\bx \\
  &\text{subject to} \q F_0 + \sum_i x_iF_i \geq 0,\q \mathbf{a}^T\bx = 1.
\end{align}
By Eq. \eqref{Eq:proc-fidelity2}, this SDP yields the value of $F^*_\proc(\mc{E}_\ideal|\mc{D}_\phase)$.

Now we present our main results.
\begin{theorem} 
\label{Thm:1-2-cloner}
Let $\cE_{\ideal}$ be the ideal $1\to 2$ phase covariant cloner.
  Then, the optimal process fidelity of $\cE_{\ideal}$ is
  \begin{equation}
    F^*_{\proc}(\cE_{\ideal}|\cD_{\phase}) = \frac{2d-1}{d^2},
  \end{equation}
  where $d$ is the dimension of the input system.
\begin{proof}
  Define \begin{gather}
    k_d = \frac{1}{2d-1},\\
    \bx = (k_d, k_d, 0, k_d, k_d, 0, k_d, k_d, 0),
  \end{gather}
  where $d$ is the dimension. 
It is straightforward to show that $X = \sum_i x_iX_i$ satisfies the positivity and the trace
  condition. So this $\bx$ is a primal feasible solution and it yields
  $F_\cE(\bx) = \frac{2d-1}{d^2}$.
  Consider the dual form
  \begin{align} \label{SDP}
    &\text{maximize} \q -\tr F_0Z\\
    &\text{subject to} \q\tr F_iZ = -c_i, Z \geq 0.
  \end{align}
  We will construct $Z$
  such that $\tr F_0Z = (2d-1)/d^2$ and show that this $Z$ is dual feasible.  By strong duality, this implies that $(2d-1)/d^2$ is indeed the optimal solution.
  Let $Z = \hat{Z}\oplus z$, where
  $\hat{Z} = \sum_i b_iX_i$, $b_i\in \R$ and $z = \frac{2d-1}{d^2}$. Then
  \begin{equation}
    \tr F_0Z = \frac{2d-1}{d^2},
  \end{equation}
  and the constraints of the dual form become
  \begin{align}
    \begin{aligned}
      &\tr[X_1\hat{Z}] = z - \frac{1}{d^2},\\
      &\tr[X_2\hat{Z}] = 2\tr[X_3\hat{Z}] = 2(d-1)z -\frac{2(d-1)}{d^2}\\
      &\tr[X_4\hat{Z}] = 2\tr[X_5\hat{Z}] = -\frac{4(d-1)}{d^2},\\
      &\tr[X_6\hat{Z}] = (d-1)(d-2)z -\frac{(d-1)(d-2)}{d^2}\\
      &\tr[X_7\hat{Z}] = \tr[X_8\hat{Z}] = 2\tr[X_9\hat{Z}] = -\frac{2(d-1)(d-2)}{d^2}.
    \end{aligned}
  \end{align}
  Solving these gives $b_1 = b_2 = b_3 = b_6 = \frac{2(d-1)}{d^3},$ $b_4 = b_5 = b_7 = b_8 = b_9 = -\frac{1}{d^3}.$
  We can rewrite $\hat{Z}$ as a linear combination of projections 
\begin{align}
      \hat{Z} &= \frac{1}{d^3}\Big[
        (2d-2-\sqrt{2})X_1 + \left(2d-\frac{9}{2}-\frac{3}{2\sqrt{2}}X_2 \right) \\
        &+ (2d-2)X_3 + (2d-3)X_6 +2\sqrt{2}P_A \\
        &+ 2\left(\frac{2}{\sqrt{2}}+1\right)P_B + 2P_C + 2P_D + 2P_E \Big],
  \end{align}
  where we define
  \begin{align}
      P_A &= -\frac{1}{2\sqrt{2}}X_4 + \frac{1}{4}(X_1+X_2) + \frac{1}{4}(X_1+X_5)\notag\\
      P_B &= \frac{1}{2}(X_2-X_5),\quad
      P_C = \frac{1}{2}(X_2-X_7),\quad\\
      P_D &= \frac{1}{2}(X_2-X_8),\quad
      P_E = \frac{1}{2}(X_6-X_9).
  \end{align}
  When $d\geq 3$ all the coefficients are positive, and when
  $d = 2$ it can be easily checked that $\hat{Z}\geq 0$. This shows that $Z$ is also
  positive, and therefore $\frac{2d-1}{d^2}$ is indeed the optimal solution.
\end{proof}
\end{theorem}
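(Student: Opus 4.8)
The plan is to solve the semidefinite program \eqref{SDP} by exhibiting a matching pair of primal and dual feasible points, both achieving the value $(2d-1)/d^2$; strong duality (Slater's condition is easily checked since the primal has a strictly feasible point for $d\geq 2$) then forces this common value to be $F^*_\proc(\cE_\ideal|\cD_\phase)$. Concretely, I would first guess the primal optimizer. Motivated by the symmetry of the problem and the structure of $F_\cE(\mathbf x)=\frac1{d^2}x_1+\frac{d-1}{d^2}(2x_2+x_3+4x_4+2x_5)+\frac{(d-1)(d-2)}{d^2}(x_6+2x_7+2x_8+x_9)$, I would try putting equal weight $k_d$ on the ``diagonal-type'' coefficients $x_1,x_2,x_4,x_5,x_7,x_8$ and zero on $x_3,x_6,x_9$, then fix $k_d$ by the trace constraint $A(\mathbf x)=x_1+(d-1)(2x_2+x_3)+(d-1)(d-2)x_6 = (2d-1)k_d = 1$, giving $k_d = 1/(2d-1)$. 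Plugging this into $F_\cE$ yields $(2d-1)/d^2$, so it only remains to verify $X=\sum_i x_iX_i\geq 0$ for this choice; this is a block-diagonal check after grouping basis elements into the invariant sectors (the $\op{iii}{iii}$ sector, the $\{\op{iik}{iik},\op{iki}{iki},\op{kik}{\cdot},\dots\}$ sectors indexed by pairs $i\neq k$, and the $i\neq k\neq\ell$ sectors), where each block is a small fixed-size matrix whose eigenvalues are nonnegative.

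For the dual, I would parametrize $Z=\hat Z\oplus z$ with $\hat Z=\sum_i b_iX_i$ and $z$ the objective value $(2d-1)/d^2$, so that $-\tr F_0 Z = -z\cdot 1$ has the right magnitude up to sign bookkeeping. The equality constraints $\tr F_iZ=-c_i$ become a linear system in the $b_i$ (using $\tr F_iZ=\tr[X_i\hat Z]-a_i z$ and the precomputed inner products $\tr[X_i X_j]$, most of which vanish by orthogonality of the $X_i$ on disjoint basis supports). Solving that system gives the explicit values $b_1=b_2=b_3=b_6=2(d-1)/d^3$ and $b_4=b_5=b_7=b_8=b_9=-1/d^3$. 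The remaining and only genuinely nontrivial step is to show $\hat Z\geq 0$: since some $b_i$ are negative, one must re-express $\hat Z$ as a nonnegative combination of manifestly positive operators. I would do this by completing squares within each invariant block — writing $-\tfrac{1}{2\sqrt2}X_4+\tfrac14(X_1+X_2)+\tfrac14(X_1+X_5)$ and the differences $\tfrac12(X_2-X_5)$, $\tfrac12(X_2-X_7)$, $\tfrac12(X_2-X_8)$, $\tfrac12(X_6-X_9)$ as rank-one-type projectors $P_A,\dots,P_E$ — and checking that the leftover coefficients of $X_1,X_2,X_3,X_6$ together with $2\sqrt2,\,2(\tfrac{2}{\sqrt2}+1),2,2,2$ on $P_A,\dots,P_E$ are all positive when $d\geq 3$, with the $d=2$ case handled by a direct $O(1)$-size eigenvalue computation.

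The main obstacle is the dual positivity certificate: the diagonalizing/square-completing decomposition of $\hat Z$ is not something one reads off, and producing the specific projectors $P_A,\dots,P_E$ together with the right scalar multipliers requires understanding the block structure induced by properties (i)–(iii). Once the nine operators $X_i$ are organized by which permutation-and-phase sector they live in, each sector contributes a small matrix, and the task reduces to showing those small matrices (with the chosen $b_i$) are positive semidefinite; the square-completion is then a per-block linear-algebra exercise. I would also make sure to state Slater's condition explicitly (the strictly feasible primal point, e.g. a small perturbation of the optimizer keeping $X>0$) so that strong duality applies and the primal value equals the dual value with no gap, completing the proof that $F^*_\proc(\cE_\ideal|\cD_\phase)=(2d-1)/d^2$.
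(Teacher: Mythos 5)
Your proposal follows essentially the same route as the paper's proof: the same primal feasible point $\bx=(k_d,k_d,0,k_d,k_d,0,k_d,k_d,0)$ with $k_d=1/(2d-1)$, the same dual certificate $b_1=b_2=b_3=b_6=2(d-1)/d^3$, $b_4=b_5=b_7=b_8=b_9=-1/d^3$, and the same positivity decomposition via the projectors $P_A,\dots,P_E$. The only (harmless) difference is your explicit appeal to Slater's condition, which is not even needed here since exhibiting primal and dual feasible points with equal objective value already pins down the optimum by weak duality.
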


Note that one can calculate the single-qudit fidelity using a map $\cE$ that is optimal for the process fidelity.  Doing so leads to a single-qudit fidelity of
$(d+1)/(2d-1)$, which is strictly smaller than the optimal value \cite{FanEtAl2003}.
This means that the map yielding optimal process fidelity does not
give optimal single-qudit fidelity unlike the universal cloners \cite{Werner1998}.

\section{Transposition cloning}

\label{Sect:transpose-cloning}
\subsection{Optimal phase-covariant transposition}
\tab The transposition map is another famous non-CP map. The ideal operation
of transposition map is given by
\begin{equation}
\label{Eq:transpose}
  \mc{E}_\ideal:\op{\psi}{\psi} \mapsto \op{\psi}{\psi}^T,\qquad\forall\rho\in\mc{D}_\pure,
\end{equation}
It has been shown that the optimal fidelity for approximate transposition
channel of arbitrary pure states is $\frac{2}{d+1}$ \cite{Buscemi2003}. What happens if we
restrict the input states to phase covariant states?  In this case, Eq. \eqref{Eq:transpose} takes the form
\begin{equation}
\label{Eq:phase-covariant-transpose}
  \mc{E}_\ideal:\op{\btheta}{\btheta} \mapsto \op{\btheta}{\btheta}^T=\op{-\btheta}{-\btheta},\qquad\forall\rho\in\mc{D}_\phase,
\end{equation}
  where $-\btheta = (-\theta_1, \cdots, $ $-\theta_d)$. To compute the optimal process fidelity in this restricted case, we
can use the same SDP analysis. The calculation yields to $F_{\proc} = \frac{2}{d}$, and the Choi operator of an optimal map is given as
\begin{equation}
  \wt{J}(\cE) = \frac{1}{d-1} \sum_{i\neq j}(\op{ij}{ij} + \op{ij}{ji}).
\end{equation}
Hence we see that restricting to maximally coherent states $\ket{\btheta}$ only yields an improved process fidelity of $\frac{2}{d}$ versus $\frac{2}{d+1}$ in the universal case.  These have the same asymptotic scaling, yet in the qubit case perfect transposition can be achieved only for equatorial states since the map reduces to a $\pi$-rotation on the Bloch sphere.

\subsection{Process-optimized universal transposition cloning map}
\tab We can also define the ideal transposition cloning operation as the map that outputs
two copies of the transpose of the input state, i.e.
\begin{equation}
  \cE_{\ideal}:\rho \mapsto \rho^{T}\otimes \rho^T, \qquad\forall\rho\in \cD_{\pure}.
\end{equation}
Here we compute the process fidelity of the $1\to 2$ transposition
cloning map for an arbitraty pure state as its input.  Let $\mu$ be the induced Haar measure for the pure states $\cD_{\pure}$ and $\sigma$ be
the Haar measure on the set of unitary operators $\cU(\cH)$. 
Define the average Choi operator
\begin{equation}\label{Eq:transpose-avg-choi}
  \wt{J}(\cE) := \int_{\cU(\cH)} U^{\dagger\otimes 3} J(\cE)U^{\otimes 3}\;d\sigma(U).
\end{equation}
Notice that by the Haar invariance we get
\begin{align}
  &F_{\proc}(\cE_{\ideal}, \cE|\cD_{\pure})   \\
  &= \int_{\cD_{\pure}} \tr \[J(\cE)(\rho^{T})^{\otimes 3}\] \;d\mu(\rho)\notag\\
  &= \tr \[ \int_{\cU(\cH)} J(\cE)(U\op{0}{0}U^\dagger)^{\otimes 3} \;d\sigma(U) \]\\
  &= \tr \[ \int_{\cU(\cH)} U^{\dagger\otimes 3}J(\cE)U^{\otimes 3}\;d\sigma(U) \dop{0}^{\otimes 3}\]\\
  &= \tr \[\wt{J}(\cE)\dop{0}^{\otimes 3}\].
\end{align}
To find the optimal process fidelity, we exploit the fact that $\wt{J}(\cE)$
is $U^{\otimes 3}$-invariant.  We then obtain the following proposition as a direct consequence
of the results from work by Eggeling and Werner \cite{Eggeling2001}.

\begin{proposition}
  The optimal process fidelity of the $1\to 2$ universal transpose cloning map is $6/(d^2+3d+2)$.
\end{proposition}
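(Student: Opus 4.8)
The plan is to turn the optimization over channels into a two-line convex problem by exploiting the $U^{\otimes 3}$-invariance of $\widetilde J(\cE)$ recorded just before the statement. From the excerpt we already know $F_{\proc}(\cE_\ideal,\cE|\cD_\pure)=\tr[\widetilde J(\cE)\,\dop{0}^{\otimes 3}]$, that averaging any channel's Choi matrix over $\cU(\cH)$ leaves the process fidelity unchanged, and that the average is a positive operator with $\tr_{23}\widetilde J(\cE)=\1_d$. So it suffices to maximize $\langle 000|\widetilde J|000\rangle$ over all $U^{\otimes 3}$-invariant $\widetilde J\ge 0$ with $\tr_{23}\widetilde J=\1_d$.

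First I would invoke Schur--Weyl duality (equivalently, the Eggeling--Werner description of three-party $U^{\otimes 3}$-invariant operators): $(\C^d)^{\otimes 3}$ splits into the totally symmetric sector of dimension $\binom{d+2}{3}$ (multiplicity one), the totally antisymmetric sector of dimension $\binom{d}{3}$ (multiplicity one; empty for $d<3$), and two copies of the mixed-symmetry sector $V_{(2,1)}$ of dimension $d(d^2-1)/3$. Hence every $U^{\otimes 3}$-invariant operator has the block form $\alpha\,\Pi_{\mathrm{sym}}\oplus\beta\,\Pi_{\mathrm{anti}}\oplus(\1_{V_{(2,1)}}\otimes A)$ with $\alpha,\beta\ge 0$ and $A$ a $2\times2$ matrix, and positivity of $\widetilde J$ is exactly $\alpha\ge0$, $\beta\ge0$, $A\ge0$.

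The key observation is that $\ket{000}$ lies entirely in the symmetric sector, so $\Pi_{\mathrm{anti}}\ket{000}=0$ and the mixed-symmetry component of $\ket{000}$ vanishes, giving $\tr[\widetilde J\,\dop{0}^{\otimes 3}]=\alpha$. Second, the partial trace over the last two factors of a $U^{\otimes 3}$-invariant operator is $U$-invariant, hence a scalar multiple of $\1_d$, so the trace-preserving condition degenerates to the single scalar equation $\tr\widetilde J=d$, i.e. $\alpha\binom{d+2}{3}+\beta\binom{d}{3}+(\tr A)\,\tfrac{d(d^2-1)}{3}=d$. Since the last two terms are nonnegative, $\alpha\le d\big/\binom{d+2}{3}=6/(d^2+3d+2)$, and equality is attained by $\beta=0$, $A=0$, that is, by the channel with Choi matrix $\widetilde J=\tfrac{6}{(d+1)(d+2)}\Pi_{\mathrm{sym}}$ (one checks $\tr_{23}\Pi_{\mathrm{sym}}=\tfrac{(d+1)(d+2)}{6}\1_d$, so this is trace-preserving). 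This establishes the claimed optimal value $6/(d^2+3d+2)$.

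The only delicate point, and the main obstacle, is the representation-theoretic bookkeeping: correctly identifying the three Young-diagram sectors of $(\C^d)^{\otimes 3}$ with their multiplicities and dimensions (and noting the antisymmetric block is absent at $d=2$ without affecting the argument), and checking that the partial-trace reduction genuinely leaves only the single scalar $\alpha$ in the objective. Everything after that is a one-variable linear optimization, so I would simply cite Eggeling--Werner for the structure of $U^{\otimes 3}$-invariant operators rather than reprove it; for the upper bound alone one in fact only needs that the symmetric $U(d)$-irrep occurs with multiplicity one, forcing $\widetilde J$ to act as a scalar there.
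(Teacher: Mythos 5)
Your proposal is correct and takes essentially the same approach as the paper: both reduce the problem to a one-variable linear program over the Eggeling--Werner (equivalently Schur--Weyl) decomposition of $U^{\otimes 3}$-invariant operators, note that the objective picks out only the coefficient of the totally symmetric projector because $\ket{0}^{\otimes 3}$ lies in the symmetric sector, and bound that coefficient using the scalar trace condition (valid since invariance forces $\tr_{23}\wt{J}$ to be proportional to $\1$). The only cosmetic difference is that you phrase the positivity constraints directly in terms of the isotypic blocks $\alpha,\beta,A$ rather than the paper's coefficients $c_+,c_-,c_0,c_1,c_2,c_3$ in the $R_i$ basis.
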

\begin{proof}
  Eggeling and Werner's have shown that any quantum state $\rho$ satisfying $U^{\otimes 3}\rho U^{\dagger \otimes 3} = \rho$ can be uniquely
  expressed as
  \begin{equation} \label{Eq:transpose-sum}
    \wt{J}(\cE) = \sum_{i\in \{+,-,0,1,2,3\}} c_iR_i,\q c_i\in \R,
  \end{equation}
  where
  \begin{align}
    &R_+ = \frac{1}{6}(\1 + V_{(12)} + V_{(23)} + V_{(31)} + V_{(123)} + V_{(132)})\\
    &R_- = \frac{1}{6}(\1 - V_{(12)} - V_{(23)} - V_{(31)} + V_{(123)} + V_{(132)})\\
    &R_0 = \frac{1}{3}(2\cdot \1 - V_{(123)} - V_{(321)}),\\
    &R_1 = \frac{1}{3}(2\cdot \1 - V_{(31)} - V_{(12)}),\notag\\
    &R_2 = \frac{1}{\sqrt{3}}(V_{(12)}-V_{(31)}),\; R_3 = \frac{i}{\sqrt{3}}(V_{(123)} - V_{(321)}).
  \end{align}
  Furthermore, it satisfies the conditions (i) $c_+,c_-,c_0 \geq 0$, (ii)$c_1^2+c_2^2 + c_3^2 \leq c_0^2$, and (iii) $\tr[\rho R_+] + \tr[\rho R_-] + \tr[\rho R_-] = 1$.  Let $\wt{J}(\cE)$ be the average Choi operator defined as \eqref{Eq:transpose-avg-choi}.
  Then $\wt{J}(\cE)$ is invariant under $U^{\otimes 3}$ action, and 
  the preceding statement holds except replacing (iii) with
  \begin{equation}
   \tr_{23}[R_+\wt{J}(\cE)+R_-\wt{J}(\cE)+R_0\wt{J}(\cE)] = \1.
  \end{equation}
  Suppose $\wt{J}(\cE) = \sum c_kR_k$.
  Then the objective function to maximize becomes
  \begin{align}
    \tr\[\wt{J}(\cE)\dop{0}^{\otimes 3}\] = c_+,
  \end{align}
  while the constraints $\wt{J}(\cE) \geq 0$ is equivalent to $c_+,c_-,c_0 \geq 0$
  and $c_1^2+c_2^2+c_3^2 \leq c_0^2$ and the trace condition $\tr \wt{J}(\cE) = d$
  is equivalent to
  \begin{equation}
    \frac{d^2+3d+2}{6}c_+ + \frac{d^2-3d+2}{6}c_- + \frac{2(d^2-1)}{3}c_0 = 1.
  \end{equation}
  Solving this linear programming gives
\begin{equation}
    c_-=c_0=c_1=c_2=c_3 = 0,\qquad    c_+ = \frac{6}{d^2+3d+2}.
\end{equation}
\end{proof}
  Eggeling and Werner use Schur-Weyl duality to characterize
  $U^{\otimes 3}$-invariant positive matrices, and details
  of the proof are in \cite{Eggeling2003} and \cite{Eggeling2001}.

\subsection{Process-optimized phase-covariant transposition cloning map}
\tab We next compare the previous result with the same type of cloner except with the domain further restricted to $\mc{D}_\phase$.  The ideal map for phase covariant
transposition cloner is given by
\begin{equation}
\label{Eq:phase-covariant-transpose-cloner}
  \cE_{\ideal}:\dop{\btheta} \mapsto \dop{-\btheta}^{\otimes 2}\qquad\forall\dop{\btheta}\in \cD_{\phase}.
\end{equation}
Different from the cloning case, the average Choi operator is defined as
\begin{equation}
  \label{choi-tilde-def}
  \wt{J}(\cE) := \frac{1}{\abs{S_d}\abs{S_3}}\sum_{\pi\in S_d}\sum_{\sigma\in S_3}
    V_\sigma U_\pi^{\otimes 3} \cT(J(\cE)) U_\pi^{\dagger \otimes 3} V_\sigma^{\dagger},
\end{equation}
where the twirling operation is given by
\begin{equation}
  \mc{T}(X):=\int U(-\btheta)^{\otimes 3}\left(X\right)U(\btheta)^{\otimes 3} \;d\mu(\btheta).
\end{equation}
The positivity and trace condition of $\wt{J}$ remains the same and its invariance under the
permutations of the basis vector also remains the same. The main difference is that
$\wt{J}$ is invariant under $U(\btheta)^{\otimes 3}$ instead of $U(-\btheta)\otimes U(\btheta)^{\otimes 2}$,
and it is also invariant under permuting any subsystems instead of just 2 and 3.
Despite the difference, how we solve the optimal process fidelity is
analogous to that of the cloning case. Here we just state the result as a theorem
and include the proof in the appendix for interested readers.
\begin{theorem}\label{Thm:transpose-cloning} Let $\cE_{\ideal}$ be given by Eq. \eqref{Eq:phase-covariant-transpose-cloner}.
  Then, the optimal process fidelity of $\cE_{\ideal}$ is
  \begin{equation}
    F_{\proc}^*(\cE_{\ideal}|\mc{D}_\phase) = \begin{cases} 3/4 & d = 2\\ 6/d^2 & d \geq 3. \end{cases}
  \end{equation}
\end{theorem}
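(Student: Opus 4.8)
The plan is to mirror exactly the SDP argument used in the proof of Theorem~\ref{Thm:1-2-cloner}, adapting it to the new symmetry group. First I would set up the characterization of operators $X \neq 0$ on $(\mathbb{C}^d)^{\otimes 3}$ that satisfy the four invariance/positivity conditions governing $\wt{J}(\cE)$ in \eqref{choi-tilde-def}: invariance under $U(\btheta)^{\otimes 3}$ for all $\btheta$, invariance under $U_\pi^{\otimes 3}$ for all $\pi \in S_d$, invariance under $V_\sigma$ for \emph{all} $\sigma \in S_3$, positivity, and the partial-trace condition $\tr_{23}(\wt{J}(\cE)) = \1$. As in the cloning case, writing $X = \sum x_{ijklmn}\op{ijk}{lmn}$, the $U(\btheta)^{\otimes 3}$-invariance forces $x_{ijklmn} = 0$ unless $\theta_i + \theta_j + \theta_k - \theta_l - \theta_m - \theta_n \equiv 0$ identically; note this differs from the cloning constraint \eqref{Eq:phase-invariance} by the sign on $\theta_i$ (transpose on the first factor becomes $U(-\btheta)$, so all three slots now carry $-\btheta$ on the output side). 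Combined with full $S_3$-symmetry and $S_d$-symmetry on the basis, one gets a short list of invariant building-block operators $X_1, X_2, \ldots$ (fewer than the nine in the cloning case, because full $S_3$-invariance identifies e.g.\ the roles of slots $1,2,3$ symmetrically), and $X = \sum_i x_i X_i$.

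Next I would compute the objective $\tr[\phi_d^{+\otimes 3} X]$ and the trace constraint $\tr_{23}X = \1$ as explicit linear functions of the $x_i$, exactly as in the displayed formulas preceding Theorem~\ref{Thm:1-2-cloner}, and then assemble the primal SDP in the form \eqref{SDP}. The strategy is then the standard primal--dual certificate: exhibit an explicit primal-feasible $\bx$ whose objective value equals $3/4$ when $d=2$ and $6/d^2$ when $d \geq 3$, and separately construct a dual-feasible $Z = \hat{Z} \oplus z$ with $z$ equal to the claimed optimum and $\hat{Z} = \sum_i b_i X_i \geq 0$ satisfying the dual linear constraints $\tr[F_i Z] = -c_i$; by strong duality (Slater's condition holds since the all-identity-type point is strictly feasible), matching primal and dual values proves optimality. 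The bifurcation at $d=2$ versus $d \geq 3$ should appear naturally: one of the invariant operators degenerates or one of the ``distinct triple'' sectors $\{i \neq k \neq \ell\}$ is empty when $d = 2$, so the dual matrix's positivity certificate must be handled by a separate small computation in that case, just as $\hat{Z} \geq 0$ was checked by hand for $d=2$ in Theorem~\ref{Thm:1-2-cloner}.

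The main obstacle I anticipate is the explicit diagonalization / positivity check of the dual matrix $\hat{Z}$. In the cloning proof this required rewriting $\hat{Z}$ as a nonnegative combination of mutually orthogonal projections $P_A, \ldots, P_E$ plus the diagonal blocks, and finding the right such decomposition is the genuinely nontrivial bookkeeping step; here the analogous decomposition must be redone for the (different, more symmetric) algebra of $U(\btheta)^{\otimes 3}$- and $S_3$-invariant operators, and one must verify all coefficients are nonnegative for every $d \geq 3$ and then separately verify positivity at $d = 2$. A secondary but routine difficulty is correctly enumerating the invariant operators $X_i$ under the enlarged $S_3$ symmetry without double-counting, and tracking which index-coincidence patterns survive the phase constraint $\theta_i + \theta_j + \theta_k = \theta_l + \theta_m + \theta_n$. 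Since the text says the argument is ``analogous to that of the cloning case'' and defers the full proof to an appendix, I would present the above as the skeleton and relegate the coefficient computations and the $d=2$ positivity check to the appendix, flagging the dual positivity certificate as the one place where real work is needed.
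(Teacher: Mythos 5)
Your proposal follows essentially the same route as the paper's appendix proof: the same twirling/symmetrization over $U(\btheta)^{\otimes 3}$, $U_\pi^{\otimes 3}$, and the full $S_3$ (yielding six invariant operators in place of nine), the same primal--dual SDP certificate with an explicit primal point and a dual matrix $\hat{Z}=\sum_i b_iX_i$ rewritten as a nonnegative combination of projections, and the same separate treatment of $d=2$ because the distinct-triple sector is empty there. The skeleton is sound and matches the paper; only the explicit coefficient computations remain to be filled in.
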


\subsection{Phase-covariant hybrid transposition cloning}
Now consider a hybrid transposition cloning map 
\begin{equation}
\label{Eq:hybrid-transpose}
\mc{E}_\ideal:\dop{\btheta} \mapsto \dop{\btheta}\otimes \dop{-\btheta}\qquad\forall\dop{\btheta}\in\mc{D}_\btheta.
\end{equation}
It is easy to see that the symmetries of the process fidelity
between the optimal map and the above ideal map are identical to that of the
cloning case, except we permute the system 1 and 3.
In other words, we can reuse the constraints (i)-(iii) in Lemma \ref{Lem:avg-choi} to
construct the average Choi map for the hybrid cloning
and make a modification for (iv) to be $\tr_{12}(\wt{J}(\cE_{\ideal})) = \1$,
instead of taking the partial trace over system 2 and 3. In fact,
$U(-\btheta)\otimes U(\btheta)^{\otimes 2}$ invariance guarantees the partial trace
over any two system of $\wt{J}(\cE)$ to be a scalar multiple of $\1$. 
Hence, we can conclude that the process fidelity of the hybrid map is $(2d-1)/d^2$.  Furthermore, 
if we let $\mc{E}_{\text{hybrid}}$ denote an optimal approximation of the hybrid transposition cloner (phase-covariant) and $\mc{E}_{\text{cloner}}$ an optimal approximation of the $1\to 2$ cloner (phase-covariant), then 
\begin{equation}
\wt{J}(\cE_{\text{hybrid}}) = V_{(13)}\wt{J}(\cE_{\text{clone}})V_{(13)}.
\end{equation}

It is interesting that the two maps
\begin{align}
\dop{\btheta} &\mapsto \dop{\btheta}\otimes \dop{-\btheta}\notag\\
\dop{\btheta} &\mapsto \dop{\btheta}\otimes \dop{\btheta}
\end{align} 
can be approximated with the same process fidelity.  One might not expect this since the second can be obtained from the first by applying a transpose on the first system, and as computed above, the transposition itself has a process fidelity of $2/d$.  Hence a composition of maps yields a highly non-optimal approximation of the $1\to 2$ phase-covariant cloner.  We make this comparison more explicitly in the next section.


\onecolumngrid
\vspace{2em}

\begin{figure}[h!]
  \centering
  \begin{subfigure}[b]{0.45\textwidth}
  \includegraphics[width=\textwidth]{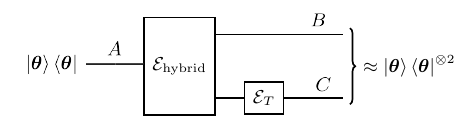}
  \caption{Modular cloning channel}
  \end{subfigure}
  \begin{subfigure}[b]{0.45\textwidth}
  \includegraphics[width=\textwidth]{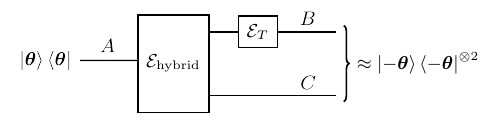}
  \caption{Modular transpose cloning channel}
  \end{subfigure}
  \caption{Modular cloning and transpose cloning channels}\label{fig:modular-cloner}
\end{figure}
\vspace{2em}

\twocolumngrid

\subsection{Modular building of QCMs}
\label{Sect:modular}

In this section we consider a modular approach of simulating some non-physical process by breaking the latter down into smaller parts and then simulating each of those.  We have just observed that a $1\to 2$ phase-covariant cloner can be obtained by combining the hybrid cloner (Eq. \eqref{Eq:hybrid-transpose}) with the transpose map (Eq. \eqref{Eq:phase-covariant-transpose}).
The construction is depicted in Fig.
\ref{fig:modular-cloner}, and the question we consider here is how well the combined
optimal maps for the individual parts, $\mc{E}_{\text{hybrid}}$ and $\mc{E}_{T}$ 
respectively, compare to the optimal map computed in Theorem \ref{Thm:1-2-cloner}.  
The composition of $\mc{E}_{\text{hybrid}}:A\to B\wt{C}$ and $\mc{E}_{T_C}:\wt{C}\to C$ can be expressed 
in terms of their Choi matrices as
\begin{multline}
 \hspace{-1em} (\cE_{T_C}\circ\cE_{\text{hybrid}})(\rho) = \tr_{A\wt{C}}\left[ \left(J(\cE_{\text{hybrid}})((\rho^T)^A \otimes \1^{B\wt{C}})\right)^{T_{\wt{C}}}\right.\\
\left. \otimes \1^{C} (J(\cE_T)^{C\wt{C}}\otimes \1^{AB}) \right],
\end{multline}

where $A,B,C$ denote the quantum systems indicated in Figure ~\ref{fig:modular-cloner}
and the subscript $T_{\wt{C}}$ denotes the partial transpose of system $\wt{C}$.

Similarly, the transposition cloning of Eq. \eqref{Eq:phase-covariant-transpose-cloner}
can be obtained by combining the hybrid cloner $\cE_{\text{hybrid}}:A\to \wt{B}C$ (Eq. \eqref{Eq:hybrid-transpose})
with the transpose map $\cE_{T_B}: \wt{B}\to B$ (Eq. \eqref{Eq:phase-covariant-transpose}). 
The construction is depicted in Fig.
\ref{fig:modular-anti-cloner}, and we compare to the optimal map computed in
Theorem \ref{Thm:transpose-cloning}. 
The composition of maps has the form
\begin{multline}
  \hspace{-1em} (\cE_{T_B}\circ\cE_{\text{hybrid}})(\rho)
  = \tr_{A\wt{B}}\left[ \left(J(\cE_{\text{hybrid}})
  ((\rho^T)^A \otimes \1^{\wt{B}C})\right)^{T_{\wt{B}}} \right. \\
  \left. \otimes \1^{B} (J(\cE_T)^{B\wt{B}}\otimes \1^{AC}) \right],
\end{multline}
where $A,B,C$ denote the quantum systems indicated in Figure ~\ref{fig:modular-cloner}
and the subscript $T_{\wt{B}}$ denotes the partial transpose of system $\wt{B}$. 

The process fidelity of both maps are
$\frac{3d-4}{d(d-1)(2d-1)}$, which is factor of $d$ smaller than the optimal cloner in Theorem \ref{Thm:1-2-cloner}.
However, it is comparable to that of optimal transposition cloning channel with process fidelity
of $6/d^2$. Hence in this case, the modular approach is asymptotically optimal (i.e. as $d\to\infty$) in the process fidelity.


\section{Conclusion}
\label{Sect:conclusion}

We have shown that the optimal process fidelity of the $1\to 2$
phase covariant-cloner is $\frac{2d-1}{d^2}$. The obtained channel yields its
single-qudit fidelity of $(d+1)/(2d-1)$, which is less than the previously known result
by \cite{FanEtAl2003}. Conversely, the cloner by \cite{FanEtAl2003} also does not yield the optimal process fidelity. This means that the optimal phase-covariant cloners do not coincide with two different
fidelity measures, as it is the case for the universal cloner.
We also defined a transpose cloning map, and showed that the optimal process fidelity
for transpose cloning of arbitrary pure states is $6/(d^2+3d+2)$ using the results from
\cite{Eggeling2001}.  In comparison, the optimal fidelity for transpose cloning of phase-covariant states is $6/d^2$.

Future work in this direction would include finding an explicit formula for $1\to M$ 
phase-covariant cloners.  This will require a representation-theoretic approach 
involving representations of $S_M$ with additional positivity and phase covariant 
structure.  In general, the $1\to M$ phase covariant cloner is interesting for 
applications in phase estimation.  We expect that the modular approach presented in 
Fig. \ref{fig:modular-cloner} may also be helpful in simulating such a map and other 
multi-system maps like it.



\appendix
\section{Solving the process fidelity for $1\to 2$ phase covariant transpose cloner}

Let $\cE_{\ideal}: \dop{-\btheta}\mapsto \dop{\btheta}^{\otimes 2}$ be the $1\to 2$
ideal phase covariant transpose cloning map.
\begin{lemma}\label{Lem:avg-choi-anti}
  Let $\cE$ be a quantum channel.
  Define the average Choi operator $\wt{J}$ for the channel $\cE$ as
  \begin{equation}
    \label{avg-choi-equatorial-2}
    \wt{J}(\cE) := \frac{1}{\abs{S_d}\abs{S_3}}\sum_{\substack{\pi\in S_d\\\sigma\in S_3}
    }
      V_\sigma U_\pi^{\otimes 3} \cT(J(\cE)) U_\pi^{\dagger \otimes 3} V_\sigma^{\dagger},
  \end{equation}
  where the twirling operation $\cT$ is defined as
\begin{equation}
  \label{phase-anti-proc}
  \mc{T}(X):=\int U(\btheta)^{\otimes 3}\left(X\right)U(-\btheta)^{\otimes 3} \;d\mu(\btheta).
\end{equation}

  then we have
  \begin{enumerate}[(i)]
    \item $\wt{J}(\cE)$ is invariant under conjugation by $U(\btheta)^{\otimes 3}$
      for any $\btheta$;\label{Eq:lem-Utheta-anti}
    \item $\wt{J}(\cE)$ is invariant under conjugation by $U_\pi^{\otimes 3}$
    for all $\pi \in S_d$;\label{Eq:lem-Upi-anti}
    \item $\wt{J}(\cE)$ is invariant under permuting subsystems;\label{Eq:lem-perm-anti}
    \item $\wt{J}(\cE)$ is positive and $\tr_{2}(\wt{J}(\cE)) = \1$.\label{Eq:lem-pos-anti}
  \end{enumerate}
\end{lemma}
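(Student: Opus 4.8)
The plan is to mirror the proof of Lemma~\ref{Lem:avg-choi}: the operator $\wt{J}(\cE)$ is obtained from the Choi matrix $J(\cE)$ by three successive averages over conjugation actions of groups --- first the torus, via $\cT$; then $S_d$, via $\pi\mapsto U_\pi^{\otimes 3}$; then $S_3$, via $\sigma\mapsto V_\sigma$ --- so invariance under each of these groups is essentially built in, provided the three actions interact compatibly. I would verify (i)--(iv) in that order.

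For (i), the input is translation-invariance of the Haar measure on the torus: the substitution $\btheta\mapsto\btheta+\bomega$ gives $U(\bomega)^{\otimes 3}\,\cT(X)\,U(-\bomega)^{\otimes 3}=\cT(X)$ for every $\bomega$ and every $X$, so in particular $\cT(J(\cE))$ is $U(\bomega)^{\otimes 3}$-invariant for all $\bomega$. To promote this to $\wt{J}(\cE)$ I would use two elementary facts: $V_\sigma$ commutes with $A^{\otimes 3}$ for every operator $A$ (in particular $A=U(\bomega)$), and $U_\pi^{\dagger}\,U(\bomega)\,U_\pi=U(\bomega')$ where $\bomega'$ is a permutation of the angle vector $\bomega$. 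Conjugating a generic summand $V_\sigma U_\pi^{\otimes 3}\,\cT(J(\cE))\,U_\pi^{\dagger\otimes 3}V_\sigma^{\dagger}$ by $U(\bomega)^{\otimes 3}$ then only replaces $\cT(J(\cE))$ by $U(\bomega')^{\otimes 3}\,\cT(J(\cE))\,U(-\bomega')^{\otimes 3}=\cT(J(\cE))$, so each summand, and hence $\wt{J}(\cE)$, is unchanged.

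Properties (ii) and (iii) are group-average bookkeeping: $\pi\mapsto U_\pi$ and $\sigma\mapsto V_\sigma$ are homomorphisms and each $U_{\pi'}^{\otimes 3}$ commutes with each $V_\sigma$, so conjugating $\wt{J}(\cE)$ by $U_{\pi'}^{\otimes 3}$ (respectively $V_{\sigma'}$) merely reindexes the sum over $S_d$ (respectively $S_3$) --- this is the same manipulation used in the proof of Lemma~\ref{Lem:proc}. For (iv), positivity is immediate: $J(\cE)\geq 0$ because $\cE$ is CPTP, and $\wt{J}(\cE)$ is a convex combination of unitary conjugates of $J(\cE)$. For the normalization I would argue as follows. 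From (i) and (ii), $\wt{J}(\cE)$ is invariant under $U(\btheta)^{\otimes 3}$ and $U_\pi^{\otimes 3}$; a short computation with matrix units shows that the partial trace of any such operator over any two of the three subsystems is diagonal and permutation-symmetric, hence a scalar multiple of $\1$. That scalar is pinned down by $\tr\wt{J}(\cE)=\tr\cT(J(\cE))=\tr J(\cE)=\tr\bigl(\tr_{23}J(\cE)\bigr)=\tr\1_d=d$, where the last two equalities use trace-preservation of $\cE$; hence the partial trace of $\wt{J}(\cE)$ over any pair of subsystems equals $\1$, so $\wt{J}(\cE)$ is a legitimate Choi matrix. (Invariance (iii) is not even needed for this.)

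The one genuinely delicate step is (i): one has to check that tacking on the extra $S_d\times S_3$ symmetrization does not destroy the torus-invariance already carried by $\cT(J(\cE))$, which rests precisely on the two compatibility facts noted above ($U_\pi$-conjugation permutes the family $\{U(\bomega)\}$ among itself, and $V_\sigma$ commutes with diagonal tensor powers). The only other point requiring care is the bookkeeping of normalizations through all three averages so that the partial trace in (iv) equals $\1$ exactly rather than merely being proportional to it; the rest is routine and structurally identical to the cloning case treated in Section~\ref{Sect:phase-covariant}.
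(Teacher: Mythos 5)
Your proof is correct and follows essentially the same route as the paper's, which simply asserts that (i)--(iv) ``directly follow'' from the definition of $\wt{J}$; you supply the compatibility checks the paper leaves implicit ($V_\sigma$ commutes with $A^{\otimes 3}$, $U_\pi$-conjugation permutes the family $\{U(\bomega)\}$, Haar translation invariance on the torus, and trace preservation through each averaging step). One small remark: the lemma writes the normalization as $\tr_{2}(\wt{J}(\cE)) = \1$, but as your own trace bookkeeping shows the intended Choi condition is $\tr_{23}(\wt{J}(\cE)) = \1_d$ --- tracing out only one output system leaves an operator of total trace $d$ on a $d^2$-dimensional space, which cannot equal the identity --- so your reading (partial trace over the two output systems, or indeed over any pair of subsystems by the permutation invariance) is the correct one.
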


  \begin{proof}
    (i) directly follows from 
      \begin{equation}
        U(\btheta)^{\otimes 3}\cT(J(\cE)) U(-\btheta)^{\otimes 3} = \cT(J(\cE)).
      \end{equation}
    Same as the cloning case, (ii), (iii) also directly follow from the definition of $\wt{J}$
    and (iv) follows from the fact that twirling operation, permutation of basis or
    the system do not change the positivity of $J(\cE)$.
    Trace is also invariant under $\cT$ and under conjugation by $U_\pi^{\otimes 3}$
    or $V_\sigma$.
  \end{proof}
With the above characterization of $\wt{J}$, we can establish the following in
an analogous manner to the cloning case.
\begin{lemma} \label{Lem:proc-anti}
  Suppose $\wt{J}$ is defined as \eqref{avg-choi-equatorial} for a quantum channel $\cE$.
  Then we have
  \begin{equation}
    F_{\proc}( \cE_{\ideal}, \cE|\cD_{\phase}) = \tr\left[\varphi_d^{+\otimes 3}\wt{J}(\mc{E})\right].
  \end{equation}
  We omit the proof since it is more or less identical to Lemma \ref{Lem:proc}
  for the cloning case.
\end{lemma}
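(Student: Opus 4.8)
The plan is to follow the proof of Lemma~\ref{Lem:proc} essentially verbatim, with the only substantive change coming from the different ideal map. First I would start from the Choi-form of the process fidelity, \eqref{proc-fid-choi},
\begin{equation*}
  F_{\proc}(\cE_{\ideal},\cE|\cD_{\phase}) = \int_{\cD_{\phase}}\tr\big[J(\cE)\,\rho^{T}\otimes\cE_{\ideal}(\rho)\big]\,d\mu(\rho),
\end{equation*}
and evaluate the integrand for $\rho=\dop{\btheta}$. Because $\dop{\btheta}^{T}=\dop{-\btheta}$ and the transpose-cloning ideal map sends $\dop{\btheta}\mapsto\dop{-\btheta}^{\otimes 2}$, the operator inside the trace collapses to the fully symmetric product
\begin{equation*}
  \rho^{T}\otimes\cE_{\ideal}(\rho)=\dop{-\btheta}^{\otimes 3}=U(-\btheta)^{\otimes 3}\,\varphi_d^{+\otimes 3}\,U(\btheta)^{\otimes 3},
\end{equation*}
where I used $\ket{-\btheta}=U(-\btheta)\ket{\phi_d^+}$. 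This is the one place the computation differs from the cloning case: there the first tensor factor carried $U(-\btheta)$ and the other two carried $U(\btheta)$, whereas here all three factors carry $U(-\btheta)$. Moving the phase unitaries onto $J(\cE)$ by cyclicity and recognizing the twirl $\cT$ from Lemma~\ref{Lem:avg-choi-anti} then gives
\begin{equation*}
  F_{\proc}(\cE_{\ideal},\cE|\cD_{\phase})=\int \tr\big[U(\btheta)^{\otimes 3}J(\cE)U(-\btheta)^{\otimes 3}\,\varphi_d^{+\otimes 3}\big]\,d\mu(\btheta)=\tr\big[\cT(J(\cE))\,\varphi_d^{+\otimes 3}\big].
\end{equation*}

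It then remains to show that replacing $\cT(J(\cE))$ by its permutation average $\wt{J}(\cE)$ from \eqref{avg-choi-equatorial-2} does not change the trace. Here I would use two elementary symmetries of the target state: since $\ket{\phi_d^+}=\tfrac{1}{\sqrt{d}}\sum_k\ket{k}$ is fixed by every basis permutation, $U_\pi^{\otimes 3}\,\varphi_d^{+\otimes 3}\,U_\pi^{\dagger\otimes 3}=\varphi_d^{+\otimes 3}$ for all $\pi\in S_d$; and since the three tensor factors of $\varphi_d^{+\otimes 3}$ are identical, $V_\sigma\,\varphi_d^{+\otimes 3}\,V_\sigma^{\dagger}=\varphi_d^{+\otimes 3}$ for all $\sigma\in S_3$. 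Using cyclicity of the trace to push the conjugating operators onto $\varphi_d^{+\otimes 3}$, each summand of $\wt{J}(\cE)$ satisfies
\begin{equation*}
  \tr\big[V_\sigma U_\pi^{\otimes 3}\cT(J(\cE))U_\pi^{\dagger\otimes 3}V_\sigma^{\dagger}\,\varphi_d^{+\otimes 3}\big]=\tr\big[\cT(J(\cE))\,U_\pi^{\dagger\otimes 3}V_\sigma^{\dagger}\varphi_d^{+\otimes 3}V_\sigma U_\pi^{\otimes 3}\big]=\tr\big[\cT(J(\cE))\,\varphi_d^{+\otimes 3}\big],
\end{equation*}
so averaging over $\pi\in S_d$ and $\sigma\in S_3$ yields exactly $\tr[\wt{J}(\cE)\,\varphi_d^{+\otimes 3}]$, completing the identity. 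Note that the full $S_3$ average (rather than only $\{\id,(23)\}$ as in the cloning case) is permissible precisely because $\varphi_d^{+\otimes 3}$ is symmetric under all of $S_3$.

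I do not expect a genuine obstacle, consistent with the authors' remark that the argument is essentially that of Lemma~\ref{Lem:proc}. The only step requiring care is the opening reduction, where one must correctly track which phase unitary sits on each of the three tensor factors for the new ideal map; once $\rho^{T}\otimes\cE_{\ideal}(\rho)=U(-\btheta)^{\otimes 3}\varphi_d^{+\otimes 3}U(\btheta)^{\otimes 3}$ is established, the remaining averaging step is immediate and in fact cleaner than in the cloning case. If one instead prefers to mirror Lemma~\ref{Lem:proc} line by line, the same conclusion follows from the commutation identity $U_\pi^{\otimes 3}\cT(X)U_\pi^{\dagger\otimes 3}=\cT(U_\pi^{\otimes 3}XU_\pi^{\dagger\otimes 3})$, which holds because conjugating $U(\btheta)$ by a basis permutation merely relabels the angles and the Haar measure $d\mu$ is invariant under such relabelings.
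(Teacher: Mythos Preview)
Your proposal is correct and follows essentially the route the paper has in mind: reduce $F_{\proc}$ to $\tr[\cT(J(\cE))\,\varphi_d^{+\otimes 3}]$ via the Choi representation, then use the invariance of $\varphi_d^{+\otimes 3}$ under $U_\pi^{\otimes 3}$ and all of $V_\sigma$, $\sigma\in S_3$, to replace $\cT(J(\cE))$ by the full average $\wt{J}(\cE)$. The paper omits this proof precisely because it is the same argument as Lemma~\ref{Lem:proc}, and your write-up (including the closing remark on the commutation identity $U_\pi^{\otimes 3}\cT(X)U_\pi^{\dagger\otimes 3}=\cT(U_\pi^{\otimes 3}XU_\pi^{\dagger\otimes 3})$) matches that argument.
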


Similar to the cloning case, we characterize a $d^3\times d^3$ hermitian operator
$X\neq 0$ that satisfies \ref{Eq:lem-Utheta-anti}, \ref{Eq:lem-Upi-anti},
\ref{Eq:lem-perm-anti}, \ref{Eq:lem-pos-anti} in Lemma \ref{Lem:avg-choi-anti}.
As a result, we can express $X$ as $\sum_{i=1}^6 x_iX_i$, where $x_i\in \C$ and
\begin{footnotesize}
\begin{align}
  &X_1 = \sum_{i} \op{iii}{iii}, \; X_2 = \sum_{i\neq k} \op{iik}{iik} + \op{iki}{iki} + \op{kii}{kii},\\
  &X_3 = \sum_{i\neq k} \op{iik}{kii} + \op{kii}{iik} + \op{kii}{iki} + \op{iik}{iki} + \op{iki}{iik}, \\
  &X_4 = \sum_{i\neq k\neq \ell} \op{ik\ell}{ik\ell}, \;
  X_5 = \sum_{i\neq k\neq \ell} \op{ik\ell}{k\ell i} + \op{ik\ell}{\ell ik},\\
  &X_6 = \sum_{i\neq k\neq \ell} \op{ik\ell}{ki\ell} + \op{ik\ell}{\ell ki} + \op{ik\ell}{i\ell k}.
\end{align}
\end{footnotesize}
If we denote $\mathbf{x} = (x_i) \in \C^6$, then
\begin{multline}
  \tr\left[\phi_d^{+\otimes 3} X\right]= \frac{1}{d^2}x_1 + \frac{(d-1)}{d^2} (3x_2+6x_3)\\
  + \frac{(d-1)(d-2)}{d^2}(x_4 + 2x_5 + 3x_6),
\end{multline}
and the trace condition
\small
\begin{equation}
 A(\mbf{x}) = x_1 + 3(d-1)x_2 + (d-1)(d-2)x_4 = 1.
\end{equation}

We construct the SDP the identical way as \eqref{SDP} to prove the optimality
of the process fidelity. Now we present the proof of the Theorem \ref{Thm:transpose-cloning}.
\begin{proof}
  When the dimension $d = 2$, $\bx = (0, 1/3, 1/3, 0, 0, 0)$ yields a primal feasible solution
  of $3/4$, and $\hat{Z} \oplus [3/4]$ is dual feasible, where
  $$ \hat{Z}= \frac{1}{4}(X_1+X_2+X_4) - \frac{1}{8}(X_3 + X_5 + X_6). $$
  Let $d\geq 3$ and define \begin{gather}
    k_d = \frac{1}{(d-1)(d-2)},\;
    \bx = (0, 0, 0, k_d, k_d, k_d).
  \end{gather}
  It can be easily be checked that the resulting operator satisfies the positivity
  and trace condition. Analogous to the cloning case, we prove the optimality of
  $F_\cE$ by finding a dual feasible $Z = \hat{Z}\oplus [z]$, where $z = 6/d^2$
  and $\hat{Z} = \sum_i b_iX_i$, $b_i\in \R$. The constraints $\tr F_iZ = c_i$ yields
  \begin{align}
      &\tr[X_1\hat{Z}] - z = - \frac{1}{d^2},\\
      &\tr[X_2\hat{Z}] - 3(d-1)z = -\frac{3(d-1)}{d^2},\\
      &\tr[X_3\hat{Z}] = -\frac{6(d-1)}{d^2},\\
      &\tr[X_4\hat{Z}] - (d-1)(d-2)z = -\frac{(d-1)(d-2)}{d^2}\notag\\
      &\tr[X_5\hat{Z}] = -\frac{2(d-1)(d-2)}{d^2},\\
      &\tr[X_6\hat{Z}] = -\frac{3(d-1)(d-2)}{d^2},
  \end{align}
  and we get
  \begin{equation}
    b_1 = b_2 = b_4 = \frac{5}{d^3},\q\q\q b_3 = b_5 = b_6 = -\frac{1}{d^3}.
  \end{equation}
  Next, we rewrite $\hat{Z}$ as the positive linear combination of projections to show positivity.
  Define the projections
  \begin{align}
      &P_A = \frac{2}{3}X_2 - \frac{1}{3}X_3\\
      &P_B = \frac{1}{2}\Big(1 + \frac{1}{\sqrt{2}}\Big)X_4 - \frac{1}{4\sqrt{2}}X_5 - \frac{1}{4\sqrt{2}}X_6.
  \end{align}
  Then we can write $\hat{Z}$ as
  \begin{align}
      \hat{Z} &=  \frac{5}{d^3}\Bigg(X_1 + X_2 + X_4\Bigg) - \frac{1}{d^3}\Bigg(X_3 + X_5 + X_6\Bigg)\notag\\
      &= \frac{1}{d^3}\left( 3P_1 + 4\sqrt{2}P_B + 5X_1 + 3X_2 + (3-2\sqrt{2})X_4 \right),
  \end{align}
  and hence $\hat{Z}\geq 0$ and also $Z\geq 0$. Therefore, $\frac{6}{d^2}$
  is indeed the optimal solution.

\end{proof}

\vfill

\bibliographystyle{abbrv}
\bibliography{process-fidelity}

\begin{thebibliography}{10}

\bibitem{Acin2004b}
A.~Ac\'{\i}n, N.~Gisin, and V.~Scarani.
\newblock Coherent-pulse implementations of quantum cryptography protocols
  resistant to photon-number-splitting attacks.
\newblock {\em Phys. Rev. A}, 69:012309, Jan 2004.

\bibitem{Audenaert2002a}
K.~Audenaert and B.~De~Moor.
\newblock Optimizing completely positive maps using semidefinite programming.
\newblock {\em Phys. Rev. A}, 65:030302, Feb 2002.

\bibitem{Bennett1984}
C.~H. Bennett and G.~Brassard.
\newblock {Quantum cryptography: Public key distribution and coin tossing}.
\newblock In {\em Proceedings of IEEE International Conference on Computers,
  Systems, and Signal Processing}, page 175, India, 1984.

\bibitem{Bruss2000a}
D.~Bru\ss{}, M.~Cinchetti, G.~Mauro~D'Ariano, and C.~Macchiavello.
\newblock Phase-covariant quantum cloning.
\newblock {\em Phys. Rev. A}, 62:012302, Jun 2000.

\bibitem{Bruss1998a}
D.~Bru\ss{}, D.~P. DiVincenzo, A.~Ekert, C.~A. Fuchs, C.~Macchiavello, and
  J.~A. Smolin.
\newblock Optimal universal and state-dependent quantum cloning.
\newblock {\em Phys. Rev. A}, 57:2368--2378, Apr 1998.

\bibitem{Buzek1996}
V.~Bu\ifmmode~\check{z}\else \v{z}\fi{}ek and M.~Hillery.
\newblock Quantum copying: Beyond the no-cloning theorem.
\newblock {\em Phys. Rev. A}, 54:1844--1852, Sep 1996.

\bibitem{Buscemi-PhD}
F.~Buscemi.
\newblock {\em Optimization and Realization of Quantum Devices}.
\newblock PhD thesis, Universit\'{a} degli Studi di Pavia, 2006.

\bibitem{Buscemi2003}
F.~Buscemi, G.~D'Ariano, P.~Perinotti, and M.~Sacchi.
\newblock Optimal realization of the transposition maps.
\newblock {\em Physics Letters A}, 314(5):374 -- 379, 2003.

\bibitem{Buscemi2005}
F.~Buscemi, G.~M. D'Ariano, and C.~Macchiavello.
\newblock Economical phase-covariant cloning of qudits.
\newblock {\em Phys. Rev. A}, 71:042327, Apr 2005.

\bibitem{Buscemi2007}
F.~Buscemi, G.~M. D'Ariano, and C.~Macchiavello.
\newblock {Economical realization of phase-covariant devices in arbitrary
  dimensions (Invited)}.
\newblock {\em Journal of the Optical Society of America B}, 24(2):363, 2007.

\bibitem{Buzek1999}
V.~Bu{\v{z}}ek and M.~Hillery.
\newblock {Universal optimal cloning of qubits and quantum registers}.
\newblock {\em Lecture Notes in Computer Science (including subseries Lecture
  Notes in Artificial Intelligence and Lecture Notes in Bioinformatics)},
  1509(4):235--246, 1999.

\bibitem{Cerf2002b}
N.~Cerf, T.~Durt, and N.~Gisin.
\newblock Cloning a qutrit.
\newblock {\em Journal of Modern Optics}, 49(8):1355--1373, 2002.

\bibitem{Choi-1975a}
M.-D. Choi.
\newblock Completely positive linear maps on complex matrices.
\newblock {\em Liner Alg. Appl.}, 10(3):285--290, Jun 1975.

\bibitem{DAriano2001}
G.~M. D'Ariano and P.~Lo~Presti.
\newblock Optimal nonuniversally covariant cloning.
\newblock {\em Phys. Rev. A}, 64:042308, Sep 2001.

\bibitem{DAriano2003}
G.~M. D'Ariano and C.~Macchiavello.
\newblock Optimal phase-covariant cloning for qubits and qutrits.
\newblock {\em Phys. Rev. A}, 67:042306, Apr 2003.

\bibitem{Dieks1962}
D.~Dieks.
\newblock {Communication by EPR devices}.
\newblock {\em Nature}, 194(4833):1021, 1962.

\bibitem{Durt2004}
T.~Durt and J.~Du.
\newblock Characterization of low-cost one-to-two qubit cloning.
\newblock {\em Phys. Rev. A}, 69:062316, Jun 2004.

\bibitem{Eggeling2003}
T.~Eggeling.
\newblock {\em {On multipartite symmetric states in Quantum Information
  Theory}}.
\newblock PhD thesis, 2003.

\bibitem{Eggeling2001}
T.~Eggeling and W.~R.F.
\newblock Separability properties of tripartite states with uxuxu symmetry.
\newblock {\em Phys. Rev. A}, 63:042111, Mar 2001.

\bibitem{FanEtAl2003}
H.~Fan, H.~Imai, K.~Matsumoto, and X.-B. Wang.
\newblock Phase-covariant quantum cloning of qudits.
\newblock {\em Phys. Rev. A}, 67:022317, Feb 2003.

\bibitem{Fan2001b}
H.~Fan, K.~Matsumoto, and M.~Wadati.
\newblock {Quantum cloning machines of a d-level system}.
\newblock {\em Physical Review A. Atomic, Molecular, and Optical Physics},
  64(6):1--3, 2001.

\bibitem{Fan2014}
H.~Fan, Y.~N. Wang, L.~Jing, J.~D. Yue, H.~D. Shi, Y.~L. Zhang, and L.~Z. Mu.
\newblock {Quantum cloning machines and the applications}.
\newblock {\em Physics Reports}, 544(3):241--322, 2014.

\bibitem{Fiurasek2004}
J.~Fiur{\'{a}}{\v{s}}ek.
\newblock {Optimal probabilistic cloning and purification of quantum states}.
\newblock {\em Physical Review A - Atomic, Molecular, and Optical Physics},
  70(3):1--9, 2004.

\bibitem{Gisin1997}
N.~Gisin and S.~Massar.
\newblock Optimal quantum cloning machines.
\newblock {\em Phys. Rev. Lett.}, 79:2153--2156, Sep 1997.

\bibitem{Gisin1999a}
N.~Gisin and S.~Popescu.
\newblock Spin flips and quantum information for antiparallel spins.
\newblock {\em Phys. Rev. Lett.}, 83:432--435, Jul 1999.

\bibitem{Griffiths1997}
R.~B. Griffiths and C.-S. Niu.
\newblock Optimal eavesdropping in quantum cryptography. ii. a quantum circuit.
\newblock {\em Phys. Rev. A}, 56:1173--1176, Aug 1997.

\bibitem{Jamiolkowski-1972a}
A.~Jamio\l{}kowski.
\newblock Linear transformations which preserve trace and positive
  semidefiniteness of operators.
\newblock {\em Rep. Math. Phys.}, 3(4):275--278, Dec 1972.

\bibitem{Jozsa-1994a}
R.~Jozsa.
\newblock Fidelity for mixed quantum states.
\newblock {\em Journal of Modern Optics}, 41(12):2315--2323, 1994.

\bibitem{Koniorczyk2013a}
M.~Koniorczyk, L.~Dani, and V.~Bu\v{z}ek.
\newblock Process optimized quantum cloners via semidefinite programming.
\newblock 2013.

\bibitem{Park1970}
J.~L. Park.
\newblock {The concept of transition in quantum mechanics}.
\newblock {\em Foundations of Physics}, 1(1):23--33, 1970.

\bibitem{Scarani2005}
V.~Scarani, S.~Iblisdir, N.~Gisin, and A.~Ac{\'{i}}n.
\newblock {Quantum cloning}.
\newblock {\em Reviews of Modern Physics}, 77(4):1225--1256, 2005.

\bibitem{Vollbrecht-2001a}
K.~G.~H. Vollbrecht and R.~F. Werner.
\newblock Entanglement measures under symmetry.
\newblock {\em Phys. Rev. A}, 64:062307, 2001.

\bibitem{Watrous-2018a}
J.~Watrous.
\newblock {\em The Theory of Quantum Information}.
\newblock Cambridge University Press, United Kingdom, 2018.

\bibitem{Werner1998}
R.~F. Werner.
\newblock Optimal cloning of pure states.
\newblock {\em Phys. Rev. A}, 58:1827--1832, Sep 1998.

\bibitem{Wootters1982}
W.~K. Wootters and W.~H. Zurek.
\newblock {A single quantum cannot be cloned}.
\newblock {\em Nature}, 299(5886):802--803, 1982.

\end{thebibliography}

\end{document}